\journal{Insurance: Mathematics and Economics}
\def\ps@pprintTitle{%
  \let\@oddhead\@empty
  \let\@evenhead\@empty
  \def\@oddfoot{\reset@font\hfil\thepage\hfil}
  \let\@evenfoot\@oddfoot
}
\pgfplotsset{compat=1.11}
\newtheorem{theorem}{Theorem}[section]
\newtheorem{definition}[theorem]{Definition}
\newtheorem{proposition}[theorem]{Proposition}
\newtheorem{remark}[theorem]{Remark}
\newtheorem*{example}{Example}
\providecommand{\Pb}{\mathbb{P}}				
\providecommand{\rbraces}[1]{\left( #1 \right)} 		
\providecommand{\cbraces}[1]{\left[ #1 \right]}			
\providecommand{\curly}[1]{\left\{ #1 \right\}} 		
\providecommand{\rInt}[4]{\int\limits_{#2}^{#3} #1 \ #4}				
\providecommand{\cInt}[4]{\rInt{#1}{#2}{#3}{\mathrm{d}#4}}				
\providecommand{\TD}[1]{\Lambda(#1)}		
\providecommand{\TDL}[1]{\TD{#1}}						
\providecommand{\TDI}[2]{\Lambda_{#1}(#2)}						
\providecommand{\TDC}{\lambda}								
\providecommand{\tdo}{\leq_{td}}
\begin{document}

\begin{frontmatter}

\title{Comparing and quantifying tail dependence}
\author[tud]{Karl Friedrich Siburg}%
\author[tud]{Christopher Strothmann\fnref{scholarship}}%
\fntext[scholarship]{Supported by the German Academic Scholarship Foundation.}%
\author[ul]{Gregor Weiß\corref{correspondingauthor}}%
\cortext[correspondingauthor]{Corresponding author.}%
\address[tud]{Faculty of Mathematics, TU Dortmund University, Germany}%
\address[ul]{Faculty of Economics, Leipzig University, Germany}

\begin{abstract}
We introduce a new stochastic order for the tail dependence between random variables. We then study different measures of tail dependence which are monotone in the proposed order, thereby extending various known tail dependence coefficients from the literature.
We apply our concepts in an empirical study where we investigate the tail dependence for different pairs of S{\&}P 500 stocks and indices, and illustrate the advantage of our measures of tail dependence over the classical tail dependence coefficient.
\end{abstract}

\begin{keyword}
Tail dependence, Measure of dependence, Dependence modeling
\JEL C00 \sep C58 \sep G17
\end{keyword}
\end{frontmatter}

\section{Introduction and results}

Estimating the level of tail dependence in a given data set is an integral part of modeling the distribution of financial and actuarial time series. 
Accounting for  tail dependence promises to enable the statistician to identify the risk of joint extreme comovements in multivariate data sets which would not be detected if one were to rely on linear correlation alone. 
Not surprisingly, coefficients of tail dependence that capture the asymptotic probability of such extreme comovements have been frequently used in the empirical finance literature.

In this paper, we propose a stochastic order on the set of tail dependence functions of bivariate random vectors, discuss measures that are monotone in said order, and employ all measures of tail dependence in an extensive empirical study. For the sake of better readability, we present our findings only for the bivariate case of two random variables. We point out, however, that all the results  in this paper are also valid in the multivariate case.

The interaction between random variables takes many forms and dependence modelling has introduced various concepts to describe the wide array of possible relationships.
Applications include the returns of two stocks in finance or the water levels of two nearby rivers.
These dependence concepts can treat the dependence as a whole, e.g. the correlation coefficient or concordance measures, or focus on specific aspects such as extremal events.
A classical measure of extremal dependence is the so-called (lower) \emph{tail dependence coefficient}, describing the comovement of $X$ and $Y$ in the tails of their joint distribution. 
Given two random variables $X$ and $Y$ with univariate distribution functions $F_X$ and $F_Y$, respectively, the (lower) \emph{tail dependence coefficient} is defined as
$$ \TDC := \lim\limits_{s \searrow 0} \frac{\mathbb{P} \rbraces{X \leq F_X^{-1}(s) \mid Y \leq F_Y^{-1}(s)}}{s} ~. $$
Considering the previous two applications, the tail dependence coefficient describes the probability of jointly occurring losses or severe drought events. 
For an extensive introduction to tail dependence and its properties, see for example \cite{Joe.2015}. 

The number $\TDC$, however, captures the extremal behaviour only along the diagonal and ignores all additional information concerning the asymmetric extremal behaviour of $X$ and $Y$. 
This is contrary to many empirical studies in recent years, which assert a, in general, distinctly asymmetric dependence between financial assets.
Therefore, in this paper, we will study the much more general (lower) \emph{tail dependence function}
\begin{equation} \label{eqn:def_tdf}
	\TDL{x,y} := \lim\limits_{s \searrow 0} \frac{\mathbb{P} \rbraces{X \leq F_X^{-1}(sx) \mid Y \leq F_Y^{-1}(sy)}}{s}
\end{equation}
with $x,y\geq 0$.
In particular, the tail dependence coefficient is just one single value of the tail dependence function $\TDL{x,y}$, namely $\TDC = \TDL{1,1}$.
We will see that the tail dependence function is a much finer quantity than the classical tail dependence coefficient and generates a variety of tail dependence measures that can distinguish between different extremal behaviours undetected by the tail dependence coefficient. 

Crucial to any systematic approach to measures of tail dependence is the existence of an underlying tail dependence order, since measures alone are not sufficient for a consistent comparison of the degree of tail dependence between two pairs of random variables. Let us explain this counterintuitive fact in the more familiar framework of dependence modelling where Spearman's $\rho$ and Kendall's $\tau$ are probably the two best-established measures of concordance. It is, however, \emph{not} possible to use their values for a consistent comparison of the degree of concordance between two pairs of random variables---indeed, there are $(X_1,Y_1)$ and $(X_2,Y_2)$ satisfying $\rho(X_1,Y_1) < \rho(X_2,Y_2)$ and $\tau(X_1,Y_1) > \tau(X_2,Y_2)$; see \citet[][Ex.~2.3.11]{Strothmann.2021}. This inherent inconsistency can only be resolved by the choice of an underlying order with respect to which both measures $\rho$ and $\tau$ are monotone.

Empirically, we find the following main results. First, index/stock combinations from the S\&P 500 exhibit both symmetric and asymmetric tail dependence functions. This underlines the need to take into account more than just the coefficient of tail dependence when assessing the random vector's tail behavior. Second, we find that different measures that are monotone in our proposed stochastic order of tail dependence produce different assessments of the strength of tail dependence. Finally, we highlight the significant range some of the monotone measures can take in value for the same coefficient of tail dependence.

Our paper is related to several empirical and theoretical studies across the risk management, actuarial science, and finance literature. In actuarial science, for example, \citet{EckertGatzertEJOR} model extreme losses using copulas to account for tail dependence (``ruin probability''). Coefficients of (usually lower) tail dependence have also been used in asset pricing by \citet{Rod,oki08,fack11,Meine,RuenziWeigert,irresbergerweiss} to study the risk premia for crash events. Finally, at the macro level, \citet{deJonghe} and \citet{OhPatton:2013-2,OhPatton:2013-1} employ measures of tail dependence to estimate the probability of a crash in the financial system. None of these applied studies, however, reflect on or discuss their sole use of the coefficients of tail dependence for measuring the degree of tail dependence. 

In a related strand of the literature, several studies have proposed alternative measures for total tail dependence \citep[see][]{Total}, heavy tailedness \citep[see][]{Ji}, tail negative dependence \citep[see][]{Negative}, quadrant and intermediate tail dependence \citep[see][]{Quadrant}. We point out, however, that these studies focus solely on specific measures of tail dependence and do not incorporate an underlying stochastic ordering.
Other approaches to extremal orderings have been proposed in the literature (see, e.g., \citet{Hua.2012b} and \citet{Li.2013} and the references therein) which are, however, more restrictive than the tail dependence order proposed in the present work. Finally, \citet{SIBURG2016241} propose an order and measures of the \textit{asymmetry} of a copula, but do not study the ordering of copulas with respect to their tail dependence.

The rest of the paper is organized as follows. 
Section~\ref{section:order} introduces our ordering of tail dependence and its key properties. 
A variety of new measures of tail dependence, all of them monotone in that ordering, are presented in Section~\ref{section:measures}, together with various examples comparing them to the classical tail dependence coefficient. The final Section~\ref{section:empirical} concludes our paper with an empirical study illustrating the various concepts for real financial data.

\section{Tail dependence order} \label{section:order}

Suppose you are given two pairs of random variables $(X_1,Y_1)$ and $(X_2,Y_2)$---how do you determine which of them is more tail dependent? 
One way to do this would be to order them by some measure, e.g., to evaluate the tail dependence coefficient on these two pairs and call that pair more tail dependent which has the larger tail dependence coefficient. 
Reasonable as this may seem, Figure~\ref{fig:tdc_drawbacks} illustrates the drawbacks of such an approach. 
While the pairs $(X_1, Y_1)$ and $(X_2, Y_2)$ possess the same tail dependence coefficient, their overall tail behaviour varies drastically. 
This difficulty leads to the construction and investigation of a more comprehensive stochastic order which allows for a meaningful comparison of $(X_1, Y_1)$ and $(X_2, Y_2)$. 
Building upon the chosen stochastic order, one would then like to construct stochastic measures which respect that order, i.e., are monotone with respect to that order. 
An extensive treatment of a variety of stochastic orders and their relations can be found in \cite{Shaked.2007}.

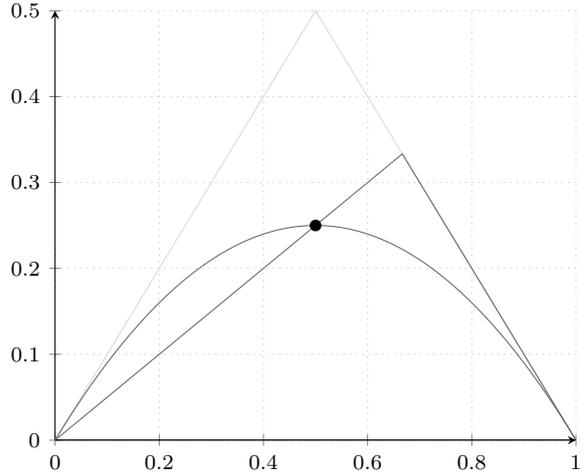
\begin{figure}
\begin{tikzpicture}[ declare function={ 
    			tdf(\t) 	= min(\t, 1-\t); 
				tdf1(\t)	= \t*(1-\t);
				tdf2(\t) 	= min(\t/2, 1-\t);
			}]
        \begin{axis}[mlineplot]
            \addplot[name path=f, domain=0:1, samples=151, color=gray!30]{tdf(\x)};
            \addplot[name path=g, domain=0:1, samples=151, color=black!60]{tdf1(\x)};
            \addplot[name path=h, domain=0:1, samples=151, color=black!60]{tdf2(\x)};
            \addplot[mark=*] coordinates {(1/2, {tdf1(0.5)})};
			\path[draw,name path = xAxis] (axis cs:0,0) -- (axis cs:1,0);
        \end{axis}
\end{tikzpicture}   
\caption{Plot depicting the tail dependence functions $\Lambda_1(s) = s(1-s)$ and $\Lambda_2(s) = \min\rbraces{s/3, 1-s}$ and their respective tail dependence coefficients $\lambda_1 = \lambda_2 = 1/4$.}
\label{fig:tdc_drawbacks}
\end{figure}

Since our aim is to investigate the tail behaviour of a pair of random variables, our proposed tail dependence order is based on the respective tail dependence functions $\Lambda_{(X_i,Y_i)}$ where we abbreviate $$ \TDI{i}{x,y} := \Lambda_{(X_i,Y_i)}(x,y) $$ for $i=1,2$. 
In the following, we will always assume that all pairs of random variables possess continuous univariate marginal distributions\footnote{See \cite{Feidt.2009} for a discussion of tail dependence in the presence of discontinuous univariate marginal distributions.} and admit a tail dependence function; we refer to \cite{Joe.2015} for more details.

\begin{definition}
We say that a pair of random variables $(X_1,Y_1)$ is less tail dependent than another pair $(X_2,Y_2)$, written $(X_1,Y_1) \tdo (X_2,Y_2)$, if and only if $$\TDI{1}{x,y} \leq \TDI{2}{x,y} $$ holds for all $x,y \geq 0$.
\end{definition}

We point out that $\tdo$ is only a preorder, i.e.\ a reflexive and transitive relation. It is neither antisymmetric (since all copulas with vanishing tail dependence function are comparable) nor total (see the example in Figure~\ref{fig:tdc_drawbacks}). Nevertheless, following common practice, we will call $\tdo$ the \emph{tail dependence order}.

The tail dependence function $\TDL{x, y}$ of a pair of random variables $(X,Y)$ is, by definition, a function on $[0,\infty) \times [0,\infty)$. 
It has, however, additional properties which we will need later that are collected in the following proposition; for the proof of these properties we refer to \cite{Gudendorf.2010}.

\begin{proposition} \label{prop:tdf_properties}
A function $\Lambda: [0, \infty) \times [0, \infty) \rightarrow [0, \infty$ is the tail dependence function of a random pair $(X, Y)$ if and only if 
\begin{enumerate}
	\item $\Lambda$ is homogeneous of order $1$, i.e., $\TDL{ax,ay} = a \TDL{x,y}$ for any $a > 0$
	\item $0 \leq \Lambda(x, y) \leq \min\rbraces{x, y}$ for all $x, y \geq 0$
	\item $\Lambda$ is concave.
\end{enumerate}
In particular, every tail dependence function is continuous.
\end{proposition}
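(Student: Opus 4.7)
The plan is to treat the two implications of the characterization separately and then derive continuity as a corollary of concavity.

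For the necessity direction, suppose $\Lambda(x,y) = \lim_{s \searrow 0} C(sx,sy)/s$, where $C$ denotes the copula of $(X,Y)$. Homogeneity follows by the substitution $s' = as$ in the defining limit; the bounds follow from $C \geq 0$ together with the Fr\'echet--Hoeffding upper bound $C(u,v) \leq \min(u,v)$, after dividing by $s$. Concavity is the subtle point. My plan is to pass to the auxiliary function
\[
\ell(x,y) := x + y - \Lambda(x,y),
\]
which coincides (via elementary inclusion-exclusion applied to the joint survival probability) with the standard stable tail dependence function from multivariate extreme value theory, and then to invoke the classical fact that $\ell$ is convex. Convexity of $\ell$ translates directly into concavity of $\Lambda$. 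Establishing this convexity is the main obstacle, but it is a well-known consequence either of the de~Haan--Resnick spectral representation of regularly varying measures or of the Pickands parametrization $\ell(x,y) = (x+y)\,A\bigl(y/(x+y)\bigr)$ with convex $A$.

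For the sufficiency direction, the task is constructive: given any $\Lambda$ satisfying the three listed properties, I must exhibit a random pair realising it. My plan is to check first that $\ell(x,y) := x + y - \Lambda(x,y)$ is a valid stable tail dependence function, since the required conditions $\max(x,y) \leq \ell(x,y) \leq x + y$, homogeneity, and convexity follow immediately from the three properties of $\Lambda$. Then I would form the extreme value copula
\[
C_{EV}(u,v) := \exp\bigl(-\ell(-\log u,\, -\log v)\bigr)
\]
and pass to its survival copula $\widehat{C}(u,v) := u + v - 1 + C_{EV}(1-u,\, 1-v)$. A direct Taylor expansion as $s \searrow 0$ gives $C_{EV}(1-sx,1-sy) = 1 - s\,\ell(x,y) + O(s^2)$, so that
\[
\widehat{C}(sx,sy)/s \;\longrightarrow\; x + y - \ell(x,y) \;=\; \Lambda(x,y),
\]
exhibiting $\Lambda$ as the lower tail dependence function of $\widehat{C}$.

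Finally, continuity of $\Lambda$ is obtained for free: any concave finite function on the convex cone $[0,\infty)^2$ is continuous on its interior, and the squeeze $0 \leq \Lambda(x,y) \leq \min(x,y)$ forces $\Lambda$ to vanish on the coordinate axes, so continuity extends to the boundary.
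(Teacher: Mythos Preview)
Your proposal is correct and in fact supplies substantially more than the paper does: the paper does not prove this proposition at all, but simply states it and refers the reader to Gudendorf and Segers for the argument. Your sketch follows exactly the standard route taken in that survey and in the extreme value literature more broadly---passing between the lower tail copula $\Lambda$ and the stable tail dependence function $\ell(x,y) = x + y - \Lambda(x,y)$, invoking the known convexity of $\ell$ (via the spectral or Pickands representation) for the necessity direction, and realising a prescribed $\Lambda$ as the lower tail of the survival copula of the associated extreme value copula for sufficiency. The Taylor computation giving $\widehat{C}(sx,sy)/s \to x+y-\ell(x,y)=\Lambda(x,y)$ is correct, and handling boundary continuity via the squeeze $0 \leq \Lambda \leq \min$ is the clean way to do it. There is nothing in the paper itself to compare against; your argument is the canonical one.
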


\begin{remark} \label{rem:simplex}
In view of the homogeneity, we may, without losing any information, restrict any tail dependence function $\TDL{x,y}$ to the one-dimensional segment $\{ (x,y) \mid y=1-x \}$. 
The induced function is a Lipschitz continuous, concave function on the unit interval; abusing notation, we still call it the tail dependence function and denote it by $\TDL{s}$ where $s\in [0,1]$. 
In this setting, the tail dependence coefficient is given by
\begin{equation} \label{eq:tdc_restricted}
\lambda = 2 \Lambda(1/2) .
\end{equation}
\end{remark}

The following examples show different types of such tail dependence functions, defined on $[0,1]$. 
Note that some of these functions are symmetric with respect to the point $1/2$, which is the case, for instance, if the two random variables $X$ and $Y$ are exchangeable---see Definition~\ref{def:exchangeable}; other tail dependence functions, however, exhibit a marked asymmetric pattern. Note also that some of the
pairs of tail dependence functions shown in the next examples are ordered, while others are not.

\begin{example} \label{ex:td_functions}
\begin{enumerate}
\item \label{ex:td_functions_1} Let $(X, Y)$ follow a multivariate normal distribution with correlation $\rho \in [-1, 1]$.
Then $(X, Y)$ is tail independent, i.e. fulfils $\Lambda_{(X,Y)}(x,y) = 0$ for all $x, y \geq 0$, whenever $\rho < 1$ holds and tail dependent with $\Lambda_{(X,Y)}(x,y) = \min\rbraces{x, y}$ in case of $\rho = 1$. 
\item \label{ex:td_functions_2} Proposition~\ref{prop:tdf_properties} yields a straightforward construction method for arbitrary tail dependence functions. 
Two asymmetric tail dependence functions are depicted in Figure~\ref{fig:examples}.
\end{enumerate}
\end{example}

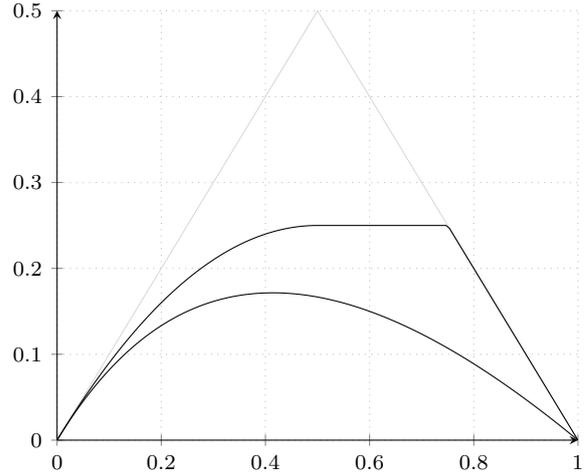
\begin{figure} 
\begin{tikzpicture}[ declare function={ 
    			tdf(\t) 	= min(\t, 1-\t); 
				tdf1(\t) 	= min(0.5*\t, 1-\t);
				tdf2(\t) 	= min(\t, 0.25*(1-\t));
				tdf3(\x,\y)	= \x*\y/(\x + \y);
				tdf13(\t)	= tdf3(\t, 0.5*(1-\t));
				tdf4(\t)	= (\t*(1-\t))*(\t < 0.5) + 0.25*(0.5 <= \t)*(\t < 0.75) + (1-\t)*(0.75 <= \t)*(\t < 1);
			}]
        \begin{axis}[mlineplot]
            \addplot[name path=f, domain=0:1, samples=101, color=gray!30]{tdf(\x)};
            \addplot[name path=f, domain=0:1, samples=151]{tdf13(\x)};
            \addplot[name path=f, domain=0:1, samples=151]{tdf4(\x)};
        \end{axis}
\end{tikzpicture}
\caption{Plot depicting asymmetric tail dependence functions.}
\label{fig:examples} 
\end{figure}

Returning to the tail dependence order, the following simple result shows that the tail dependence order implies the ordering of the tail dependence coefficient, but not vice versa.

\begin{proposition} \label{prop:tdc_monotone}
If $(X_1,Y_1) \tdo (X_2,Y_2)$ then $\lambda_1 \leq \lambda_2$; the converse is not true.
\end{proposition}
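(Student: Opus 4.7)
The forward implication is essentially immediate from the definitions. The plan is to apply the defining inequality $\TDI{1}{x,y} \leq \TDI{2}{x,y}$, which must hold for all $x,y \geq 0$, at the single point $(x,y) = (1,1)$. By Equation~\eqref{eqn:def_tdf} one has $\lambda_i = \TDI{i}{1,1}$, so the inequality $\lambda_1 \leq \lambda_2$ follows at once. Alternatively, using the restriction to the simplex from Remark~\ref{rem:simplex} and Equation~\eqref{eq:tdc_restricted}, one evaluates at $s = 1/2$ and obtains $\lambda_1 = 2\Lambda_1(1/2) \leq 2\Lambda_2(1/2) = \lambda_2$. There is no real obstacle here.

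For the converse, the plan is to exhibit a counterexample in which two pairs share the same tail dependence coefficient (so that trivially $\lambda_1 \leq \lambda_2$) but have tail dependence functions whose graphs cross, making the pairs incomparable in $\tdo$. A ready-made candidate is already pictured in Figure~\ref{fig:tdc_drawbacks}: take $\Lambda_1(s) = s(1-s)$ and $\Lambda_2(s) = \min(s/3, 1-s)$ on $[0,1]$. Both satisfy the conditions of Proposition~\ref{prop:tdf_properties} (homogeneity after extending by $\Lambda(x,y) = (x+y)\Lambda(x/(x+y))$, the bound by $\min(x,y)$, and concavity), so each arises as the tail dependence function of some random pair. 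A direct computation gives $\lambda_1 = 2\Lambda_1(1/2) = 1/2 \cdot 1 = \ldots$ — one checks that both yield the same tail dependence coefficient $1/4$, as noted in the caption of Figure~\ref{fig:tdc_drawbacks}. On the other hand, $\Lambda_1$ and $\Lambda_2$ cross, so neither $\Lambda_1 \leq \Lambda_2$ nor $\Lambda_2 \leq \Lambda_1$ holds pointwise, which in particular means $(X_1,Y_1) \not\tdo (X_2,Y_2)$ even though $\lambda_1 \leq \lambda_2$.

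The only mildly delicate step is to verify that the two candidate functions really are tail dependence functions, i.e.\ concave and dominated by $\min(x,y)$ after the homogeneous extension to the positive quadrant. For $\Lambda_1$ this is a classical extreme-value type function; for $\Lambda_2$ it follows from the fact that the minimum of two linear functions is concave. Once this is in place, the example simultaneously demonstrates that $\tdo$ is not a total order (the point already mentioned after the definition) and that the converse implication in the proposition fails, completing the proof.
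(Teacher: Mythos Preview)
Your proof is correct and takes essentially the same approach as the paper's: the forward implication by evaluating at $s=1/2$ via \eqref{eq:tdc_restricted}, and the failure of the converse via the crossing tail dependence functions of Figure~\ref{fig:tdc_drawbacks}. Your additional check that the two candidate functions satisfy the hypotheses of Proposition~\ref{prop:tdf_properties} is a useful elaboration that the paper leaves implicit.
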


\begin{proof}
The first assertion is trivial in view of \eqref{eq:tdc_restricted}. 
The second assertion follows from Figure~\ref{fig:tdc_drawbacks}.
\end{proof}

\section{Measures of tail dependence} \label{section:measures}

While from a theoretical point of view the tail dependence function encodes all necessary information about the extremal behaviour, a practitioner will want to pinpoint more specific aspects of the extremal dependence into a single quantity. 
One possible application, for instance, could be the quantification of the overall extremal behaviour or certain diversification effects. 
This leads to the construction of \emph{measures of tail dependence}, i.e.\ functions $\mu$ that associate to each pair $(X,Y)$ a number $\mu(X,Y)$ which reflects a certain aspect of the underlying tail behaviour. 
In order to be able to compare different values of such a measure, we are looking for measures that are monotone with respect to the tail dependence order.

\begin{definition} \label{def:meas}
A measure of tail dependence is a function $\mu: (X,Y) \mapsto \mu(X,Y) \in [0,\infty)$, defined on all pairs $(X,Y)$ of random variables, satisfying the monotonicity condition $$ (X_1,Y_1) \tdo (X_2,Y_2) \Longrightarrow \mu(X_1,Y_1) \leq \mu(X_2,Y_2) . $$
 \end{definition}
 
It is important to note the fundamental role played by the underlying tail dependence order here. Of course, every measure $\mu$ can be used to define a preorder on the set of pairs $(X,Y)$ of random variables by setting
\begin{equation*}
    (X_1,Y_1) \leq_\mu (X_2,Y_2) :\Longleftrightarrow \mu(X_1,Y_1) \leq \mu(X_2,Y_2) .
\end{equation*}
This preorder $\leq_\mu$ is always reflexive, transitive and total but, in general, neither symmetric nor antisymmetric. 
Without the underlying tail dependence order, however, this construction may lead to grave inconsistencies as illustrated by Figure~\ref{fig:inconsistent} showing two tail dependence functions $\TDI{1}{s}$ and $\TDI{2}{s}$ which are not ordered. 
Evaluating the two measures of tail dependence from Theorem~\ref{thm:meas} given by $\mu(\Lambda) := \max_{s\in [0,1]} \TDL{s}$ and $\nu(\Lambda) := \cInt{\TDL{s}}{0}{1}{s}$, however, yields $\mu(\TDI{1}{s}) > \mu(\TDI{2}{s})$ and $\nu(\TDI{1}{s}) < \nu(\TDI{2}{s})$ so that we would end up with the inconsistent relations $$ (X_1,Y_1) >_{\mu} (X_2,Y_2) \text{ and } (X_1,Y_1) <_{\nu} (X_2,Y_2) . $$

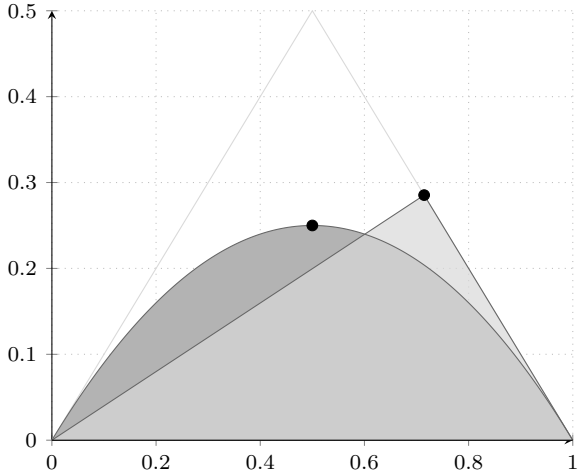
\begin{figure} 
\begin{tikzpicture}[ declare function={ 
    			tdf(\t) 	= min(\t, 1-\t); 
				tdf1(\t)	= \t*(1-\t);
				tdf2(\t) 	= min((1/3+0.066)*\t, 1-\t);
			}]
        \begin{axis}[mlineplot]
            \addplot[name path=f, domain=0:1, samples=151, color=gray!30]{tdf(\x)};
            \addplot[name path=g, domain=0:1, samples=151, color=black!60]{tdf1(\x)};
            \addplot[name path=h, domain=0:1, samples=151, color=black!60]{tdf2(\x)};
			\path[draw,name path = xAxis] (axis cs:0,0) -- (axis cs:1,0);
            \addplot[color=black, fill=gray!60] fill between [of=xAxis and g]; 
            \addplot[fill=gray!30, opacity=0.7] fill between [of=xAxis and h]; 
            \addplot[mark=*] coordinates {(1/2, {tdf1(0.5)})};
            \addplot[mark=*] coordinates {((1/(4/3 + 0.066)), {tdf2((1/(4/3 + 0.066)))})};
        \end{axis}
\end{tikzpicture}  
\caption{Plot depicting $\Lambda_1(s) = \min\curly{0.4s, 1-s}$ and $\Lambda_2(s) = s(1-s)$.
While the shaded area marks the average tail dependence $\nu$, the dot marks the maximal value $\mu$.}
\label{fig:inconsistent} 
\end{figure}

For the following, recall that $\TDL{s}$ denotes the (induced) tail dependence function on the unit interval $[0,1]$; see Remark~\ref{rem:simplex}.

\begin{theorem} \label{thm:meas}
Consider pairs $(X,Y)$ of random variables with tail dependence functions $\TDL{s}$. All the following quantities are measures of tail dependence:
\begin{enumerate}
\item the tail dependence cofficient $$ \lambda = 2\TDL{1/2} $$ and, more generally, any other value $\TDL{s_0}$ for some fixed $s_0 \in [0,1]$
\item the maximal tail dependence $$ \max_{s\in [0,1]} \TDL{s} $$
\item the average tail dependence $$ \cInt{\TDL{s}}{0}{1}{s} $$
\item any $L_p$-norm $$ \cbraces{\cInt{(\TDL{s})^p}{0}{1}{s}}^{1/p} $$ with $1\leq p < \infty$.
\end{enumerate}
\end{theorem}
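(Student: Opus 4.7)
The plan is to reduce every claim to the fact that, by Remark~\ref{rem:simplex}, the tail dependence order $(X_1,Y_1) \tdo (X_2,Y_2)$ is equivalent to the pointwise inequality $\TDI{1}{s} \leq \TDI{2}{s}$ for all $s \in [0,1]$. Each of the four proposed quantities is obtained by applying a monotone, pointwise-preserving functional to $\Lambda$, so the monotonicity required by Definition~\ref{def:meas} should follow directly in each case. Nonnegativity and finiteness (so that the values genuinely lie in $[0,\infty)$) will come from Proposition~\ref{prop:tdf_properties}(2), which gives $0 \leq \TDL{s} \leq \min(s,1-s) \leq 1/2$ on $[0,1]$.

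For item (1), I would simply read off the definition: evaluating at a fixed point $s_0$ preserves any pointwise inequality, and the tail dependence coefficient corresponds to $s_0 = 1/2$ up to the factor of $2$ already noted in \eqref{eq:tdc_restricted}. For item (2), I would argue that for every $s$ we have $\TDI{1}{s} \leq \TDI{2}{s} \leq \max_{t \in [0,1]} \TDI{2}{t}$, so taking the supremum on the left gives the result; the maximum itself is attained because $\Lambda$ is continuous on the compact interval $[0,1]$ by Proposition~\ref{prop:tdf_properties}.

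For item (3), I would integrate the pointwise inequality $\TDI{1}{s} \leq \TDI{2}{s}$ over $[0,1]$, which preserves the order by standard monotonicity of the Lebesgue integral; the $L_1$-norm is finite because $\Lambda$ is bounded. For item (4), since $\Lambda \geq 0$ and $t \mapsto t^p$ is monotone on $[0,\infty)$ for $p \geq 1$, raising to the $p$-th power preserves the pointwise inequality; integrating and then applying the monotone $p$-th root then yields the claim.

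Honestly, there is no genuinely hard step: the theorem is really a statement that the tail dependence order, being defined pointwise, is the strongest natural preorder for which all functionals built from pointwise evaluation, pointwise suprema, or monotone integrals become automatically monotone. If any step deserves a second of care, it is the existence of the maximum in item (2), which is a one-line appeal to continuity of $\Lambda$ and compactness of $[0,1]$, and the fact that each expression lies in $[0,\infty)$, which follows from the uniform bound $\TDL{s} \leq 1/2$ supplied by Proposition~\ref{prop:tdf_properties}.
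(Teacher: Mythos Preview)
Your proposal is correct and follows essentially the same approach as the paper, which simply notes that the quantities are well defined because $\Lambda$ is continuous (hence integrable) by Proposition~\ref{prop:tdf_properties} and that the monotonicity condition is ``obviously satisfied by all quantities.'' Your write-up merely spells out these obvious steps case by case, which is fine.
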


\begin{proof}
All of the above quantities are well defined since $\TDL{s}$ is continuous, hence integrable, in view of Proposition~\ref{prop:tdf_properties}. The monotonicity condition is obviously satisfied by all quantities.
\end{proof}

In fact, our approach to measuring tail dependence is quite flexible, as measures of tail dependence can be constructed from any of the well-known measures of concordance (such as Spearman's $\rho$ or Kendall's $\tau$). For this, given a tail dependence function $\Lambda$, we consider the corresponding extreme-value copula defined by
\begin{equation*}
    C^{EV}_\Lambda(u, v) := \exp(\log{u} + \log{v} + \Lambda(-\log{u}, -\log{v})) ~;
\end{equation*}
see, e.g., \citet{Joe.2015}. Then, any measure of concordance applied to $C^{EV}_\Lambda$ generates a measure of tail dependence. In particular, we obtain the following result.

\begin{theorem}
Consider pairs $(X,Y)$ of random variables with tail dependence functions $\TDL{s}$.
Then Spearman's $\rho$ induces a tail dependence measure $\mu$ via
\begin{equation*}
    \mu(X, Y)   := \rho(C^{EV}_\Lambda) 
                = 12 \cInt{\frac{1}{(2 - \TDL{s})^2}}{0}{1}{s} - 3 ~.
\end{equation*} 
\end{theorem}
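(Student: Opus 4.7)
The plan is to split the proof into two independent pieces. First, verify that $\mu$ is indeed a measure of tail dependence in the sense of Definition~\ref{def:meas}; this is the easy half. Second, derive the explicit integral formula, which is the main computational content.

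For the monotonicity, suppose $(X_1,Y_1) \tdo (X_2,Y_2)$, so that $\TDI{1}{s} \leq \TDI{2}{s}$ on $[0,1]$. By Proposition~\ref{prop:tdf_properties} we have $0 \leq \TDI{i}{s} \leq 1/2$, hence $2 - \TDI{i}{s} > 0$ and $1/(2-\TDI{1}{s})^2 \leq 1/(2-\TDI{2}{s})^2$ pointwise. Integrating over $[0,1]$ preserves the inequality, so $\mu(X_1,Y_1) \leq \mu(X_2,Y_2)$. (This step will have to be deferred until after the integral representation is established, but it is immediate once we have it.)

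For the integral formula, I would start from the classical identity $\rho(C) = 12 \int_0^1 \int_0^1 C(u,v) \, du \, dv - 3$ and plug in $C = C^{EV}_\Lambda$. Substituting $x = -\log u$, $y = -\log v$ (so $du = -e^{-x} dx$, $dv = -e^{-y} dy$) converts the unit square to the positive quadrant and yields
\begin{equation*}
\int_0^1 \int_0^1 C^{EV}_\Lambda(u,v) \, du \, dv = \int_0^\infty \int_0^\infty \exp\bigl(-2x - 2y + \TDL{x,y}\bigr) \, dx \, dy .
\end{equation*}
Now I would introduce the simplex-type coordinates $r = x+y \in (0,\infty)$ and $s = x/(x+y) \in (0,1)$, whose Jacobian is $r$. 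By the homogeneity in Proposition~\ref{prop:tdf_properties}(1), $\TDL{rs, r(1-s)} = r \TDL{s, 1-s} = r \TDL{s}$ with the induced notation of Remark~\ref{rem:simplex}. The double integral factorises:
\begin{equation*}
\int_0^1 \int_0^\infty r \exp\bigl(-r(2 - \TDL{s})\bigr) \, dr \, ds = \int_0^1 \frac{1}{(2 - \TDL{s})^2} \, ds ,
\end{equation*}
using $\int_0^\infty r e^{-ar} dr = 1/a^2$ with $a = 2 - \TDL{s} > 0$. Multiplying by $12$ and subtracting $3$ gives the stated expression.

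The main obstacle is the change of variables together with the verification that the representation $C^{EV}_\Lambda(u,v) = \exp(\log u + \log v + \TDL{-\log u, -\log v})$ in the paper is consistent with the standard extreme-value form; careful bookkeeping with the sign of $\TDL{\cdot}$ relative to the stable tail dependence function is needed, but once homogeneity is invoked the inner exponential integral collapses cleanly. The remaining monotonicity argument is then essentially a one-line observation.
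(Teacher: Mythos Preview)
Your proposal is correct. The paper, however, does not supply a proof of this theorem at all: it merely states the result after remarking that ``any measure of concordance applied to $C^{EV}_\Lambda$ generates a measure of tail dependence'' and pointing to \cite{Joe.2015}. So the implicit argument in the paper is conceptual: $\TDI{1}{\cdot}\leq\TDI{2}{\cdot}$ gives $C^{EV}_{\Lambda_1}\leq C^{EV}_{\Lambda_2}$ pointwise on $[0,1]^2$, and monotonicity of $\rho$ in the pointwise (concordance) order of copulas then yields $\mu(X_1,Y_1)\leq\mu(X_2,Y_2)$; the closed-form integral is treated as a known identity for extreme-value copulas.

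Your route is different in emphasis but equally valid and more self-contained. You first derive the explicit integral representation via the substitution $(u,v)\mapsto(x,y)=(-\log u,-\log v)$ followed by the polar-type change $(x,y)\mapsto(r,s)=(x+y,\,x/(x+y))$, exploiting homogeneity to collapse the $r$-integral, and only then read off monotonicity directly from the integrand $s\mapsto (2-\TDL{s})^{-2}$. This buys you an explicit computation that the paper omits; the paper's implicit argument buys generality (it works for any concordance measure, not just $\rho$) without needing the closed form. Both are short once set up, and your bookkeeping with the Jacobian and the bound $\TDL{s}\leq 1/2$ ensuring $2-\TDL{s}>0$ is correct.
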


Note that in practical applications the choice of a measure of tail dependence will depend on the particular course of investigation---in some cases the maximal tail dependence may be appropriate while in others the average is more convenient. 
In any case, the evaluation measures $\TDL{s_0}$ for some point $s_0\in [0,1]$, in particular the classical tail dependence coefficient $\lambda$, are quite arbitrary choices. Only in very special cases does the tail dependence coefficient provide a reasonable measure of tail dependence, as we will see next.

\begin{definition} \label{def:exchangeable}
Two random variables $X$ and $Y$ are exchangeable if and only if $$ \Pb (X\leq x, Y\leq y) = \Pb (X\leq y, Y\leq x) $$ for all $x,y$; this is equivalent to saying that the joint distribution functions of $(X,Y)$ and $(Y,X)$ agree.
\end{definition}

\begin{proposition}
If $X$ and $Y$ are exchangeable then $$ \TDC = 2\TDL{1/2} = 2\max_{s\in [0,1]} \TDL{s} = 2 \| \TDL{s} \|_{\infty} . $$
\end{proposition}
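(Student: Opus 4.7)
The plan is to reduce the claim to two facts about $\Lambda$ viewed as a function on the simplex segment: (i) exchangeability forces $\Lambda$ to be symmetric about $s=1/2$, and (ii) a concave function on $[0,1]$ that is symmetric about its midpoint attains its maximum there. The identity $\lambda=2\Lambda(1/2)$ is already recorded in \eqref{eq:tdc_restricted}, so everything reduces to proving $\Lambda(1/2)=\max_{s\in[0,1]}\Lambda(s)$.

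First, I would observe that exchangeability of $(X,Y)$ means that $(X,Y)$ and $(Y,X)$ share the same joint distribution, hence also the same univariate marginals. Looking at the defining limit \eqref{eqn:def_tdf}, swapping the roles of $X$ and $Y$ swaps the roles of the arguments $x$ and $y$ while leaving the probability unchanged. Consequently the (two-variable) tail dependence function satisfies $\Lambda(x,y)=\Lambda(y,x)$ for all $x,y\geq 0$. Restricting to the segment $y=1-x$ and using the notation of Remark~\ref{rem:simplex}, this gives $\Lambda(s)=\Lambda(1-s)$ for all $s\in[0,1]$.

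Next I would invoke concavity. By Proposition~\ref{prop:tdf_properties} the two-variable function $\Lambda$ is concave on $[0,\infty)^2$, and concavity is preserved under restriction to the affine segment $\{(s,1-s):s\in[0,1]\}$. Hence the one-variable $\Lambda(s)$ is concave on $[0,1]$. For any $s\in[0,1]$, concavity together with the symmetry from Step~1 yields
\begin{equation*}
\Lambda(1/2)=\Lambda\!\left(\tfrac{s+(1-s)}{2}\right)\geq \tfrac{1}{2}\Lambda(s)+\tfrac{1}{2}\Lambda(1-s)=\Lambda(s),
\end{equation*}
so $\Lambda(1/2)=\max_{s\in[0,1]}\Lambda(s)=\|\Lambda\|_\infty$. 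Combining this with \eqref{eq:tdc_restricted} gives the chain $\lambda=2\Lambda(1/2)=2\max_{s\in[0,1]}\Lambda(s)=2\|\Lambda\|_\infty$.

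There is no real obstacle here; the only point that requires a moment of care is justifying Step~1 cleanly from the definition of $\Lambda$ rather than from copula-level symmetry, and making sure the restriction to $\{y=1-x\}$ is the function actually denoted $\Lambda(s)$ in Remark~\ref{rem:simplex} (so that the symmetry $\Lambda(s)=\Lambda(1-s)$ is literally the midpoint symmetry needed for the concavity argument). Both are immediate once spelled out.
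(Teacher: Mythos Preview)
Your argument is correct and follows essentially the same route as the paper: both proofs combine the symmetry $\Lambda(s)=\Lambda(1-s)$ forced by exchangeability with the concavity of $\Lambda$ on $[0,1]$ to conclude that the maximum is attained at $s=1/2$. The only cosmetic difference is that the paper phrases the concavity step by noting that the set of maximizers of a continuous concave function is a closed subinterval which, by symmetry, must contain $1/2$, whereas you verify $\Lambda(1/2)\geq\Lambda(s)$ directly via the midpoint inequality.
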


\begin{proof}
$\TDL{s}$ is a continuous, concave function on the compact intervall $[0,1]$ and hence attains its maximal value on a closed, convex subset of $[0,1]$, i.e., on some compact subinterval $I\subset [0,1]$. Since $X$ and $Y$ are exchangeable $\TDL{s}$ is symmetric with respect to the point $1/2$, and we conclude that $1/2 \in I$.
\end{proof}

%

Finally, we point out that any functional combination $f(\mu,\nu)$ of measures of tail dependence with $f: [0,\infty) \times [0,\infty) \to [0,\infty)$ being increasing in each argument are also measures of tail dependence themselves; a simple example is any linear combination $a\mu+b\nu$ with $a,b\geq 0$. Such combinations can be useful in applications where one would like to consider different aspects of tail dependence simultaneously.

\begin{example}
\cite{Frahm.2006} introduced the so-called extremal dependence coefficient $$ \varepsilon_L := \frac{\TDC}{2 - \TDC} = \frac{\TDL{1/2}}{1 - \TDL{1/2}} = f(\TDL{1/2}) $$ with $f(\mu) := \frac{\mu}{1-\mu}$ which investigates the asymptotic dependence between the componentwise minima and maxima of a pair $(X,Y)$ of random variables. Since the function $f$ is increasing, the quantity $\varepsilon_L$ is a measure of tail dependence.
\end{example}

\section{Empirical analysis} \label{section:empirical}

In this section, we illustrate the practical relevance of analyzing the tail dependence function as well as several \textit{alternative} measures of tail dependence in a financial risk application. More precisely, we estimate the tail dependence function as well as the measures given in Theorem \ref{thm:meas} for a large set of bivariate equally-weighted portfolios consisting of the S\&P 500 stock index and one of the S\&P 500's constituent individual stocks.\footnote{By fixing the S\&P 500 as one of the two time series in our portfolios, we are able to analyze the systematic tail risk of the constituent individual stocks in the cross-section. However, as we focus on tail dependence and not correlation, our analysis differs from traditional analyses of stocks' beta factors.}

The financial market data we use are retrieved from \textit{EIKON Refinitive} and cover the period from Jan/9/2019 to Dec/31/2020. We select equities that are included in the S\&P 500 over our full sample period and for which we have complete data. In total, our panel data sample consists of 454 individual stocks and 2,500 trading days. For each day and time series in our sample, we then compute the log return. Summary statistics for our full sample are given in Table \ref{tab:1}. As one can see from Table \ref{tab:1}, the data exhibit the usual stylized facts of financial time series with mean daily log returns being close to zero, return distributions being negatively skewed, and fat tails.

Next, for each pair of the S\&P 500 and a constituent stock's log return time series, we estimate the corresponding tail dependence functions using the estimator proposed by \citet{SCHMIDT.2006} and rolling windows with a fixed length of 500 trading days. Consequently, we estimate $454\times 2,000$ tail dependence functions and then compute the three measures of tail dependence given in Theorem \ref{thm:meas}.

To start our discussion of the results, we first plot three selected estimated tail dependence functions from the end of our sample period (i.e., during the Covid-19 pandemic) in Figure \ref{fig:4}. As one can see from the three displayed stock pairs of Domino's Pizza, Agilent Technologies, and Celanese (each in combination with the S\&P 500), empirically, we can observe all types of symmetric and asymmetric tail dependence in the cross-section of stock pairs. To get a better picture of the time-variation in the measures of tail dependence, we plot the time evolution of the three measures $\left\|\Lambda\right\|_1$, $\lambda$, and $\left\|\Lambda\right\|_{\infty}$ for the three previously studied equity/index-pairs in Figure \ref{fig:5}. All three pairs exhibit a considerable amount of time-variation in all three measures of tail dependence. Moreover, one can see that even though the coefficient of tail dependence $\lambda$ is often close to the $\left\|\Lambda\right\|_{\infty}$-measure, we can observe periods of time (e.g., in 2017 for the S\&P500/Celanese pair, and at the end of 2017 for the S\&P500/Domino's Pizza pair) when $\lambda$ is considerably smaller than $\left\|\Lambda\right\|_{\infty}$.

Next, in Table \ref{tab:2} and Figure \ref{fig:6}, we show descriptive statistics as well as the time evolution of the $L^1$, the $L^{\infty}$, as well as the tail dependence coefficient for all bivariate pairs with the cross-sectional mean being shown as a black line and the values of the respective measures that lie between the cross-sectional 5\%- and 95\%-quantiles at each point in time being highlighted in gray. Not surprisingly for a well-diversified index, the stock/index-pairs show a large cross-sectional variation in the degree of tail dependence. Additionally, all three measures exhibit a high degree of time-variation during our sample period as well as sudden jumps.

Our analyses so far have shown that bivariate financial data time series exhibit a high degree of time variation in all three measures of tail dependence. Moreover, both symmetric and asymmetric tail dependence functions can be observed in our data. All three measures, however, are moving in lockstep during most of our sample period. To examine in more detail the question whether the $L^1$ and $L^{\infty}$ actually differ from the tail dependence coefficient in empirical settings, we compare possible and estimated values of the $L^{\infty}$ measure in Figure \ref{fig:7}. More precisely, we first estimate the tail dependence coefficient for the three bivariate stock/index pairs used in our previous analyses (Domino's Pizza, Agilent Technologies, and Celanese). We then calculate the range of possible values of the $L^{\infty}$ measure given the estimates of the tail dependence coefficient and highlight these as gray-shaded areas in Figure \ref{fig:7}. Finally, we estimate the $L^{\infty}$ measure and include them as a black line in the Figure. The estimated values of the $L^{\infty}$ measure are usually at the lower bound of the range of possible values given the estimated value of the tail dependence coefficient. However, we can observe for all three bivariate pairs time periods in which the estimated level of the $L^{\infty}$ measure is in the mid-range of its possible values. In addition to this, the range of possible values is far from being small: for a given level of the tail dependence coefficient, the range of possible $L^{\infty}$ values can have a size of up to $0.2$.

\section{Conclusion} \label{section:conclusion}

In this paper, we propose a new order on the set of pairs of bivariate random vectors based on their respective tail dependence functions. 
We show that simple measures of maximum or average tail dependence, or any $L^p$-norm are monotone with respect to our tail dependence order. 
We then highlight how different measures of tail dependence can capture different aspects of the tail dependence function's behaviour. 
Most notably, we stress the fact that the coefficient of tail dependence that is frequently used in empirical applications neglects most of the information on a random vector's tail dependence.

We further compare the different tail dependence measures in an empirical study using stock returns of the constituents of the S\&P 500. Our results show that all measures show significant cross-temporal and cross-sectional variation and do not necessarily move in parallel. Moreover, when fixing the coefficient of (lower) tail dependence, alternative measures of tail dependence can take on values in a large range showing the need to carefully choose the right measure of tail dependence for different empirical purposes. Our results thus emphasize the need to look beyond the classical coefficients of lower and upper tail dependence in empirical applications.



\begin{figure}[H]
\begin{subfigure}[c]{0.4\textwidth}
\includegraphics{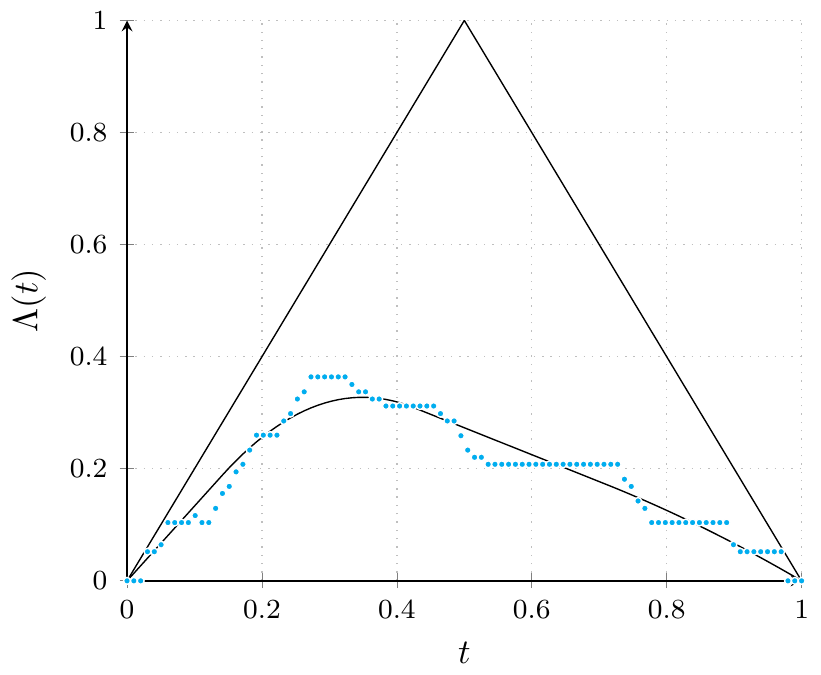}
\caption{S\&P500/Domino's Pizza Inc.}
\end{subfigure}
\begin{subfigure}[c]{0.4\textwidth}
\includegraphics{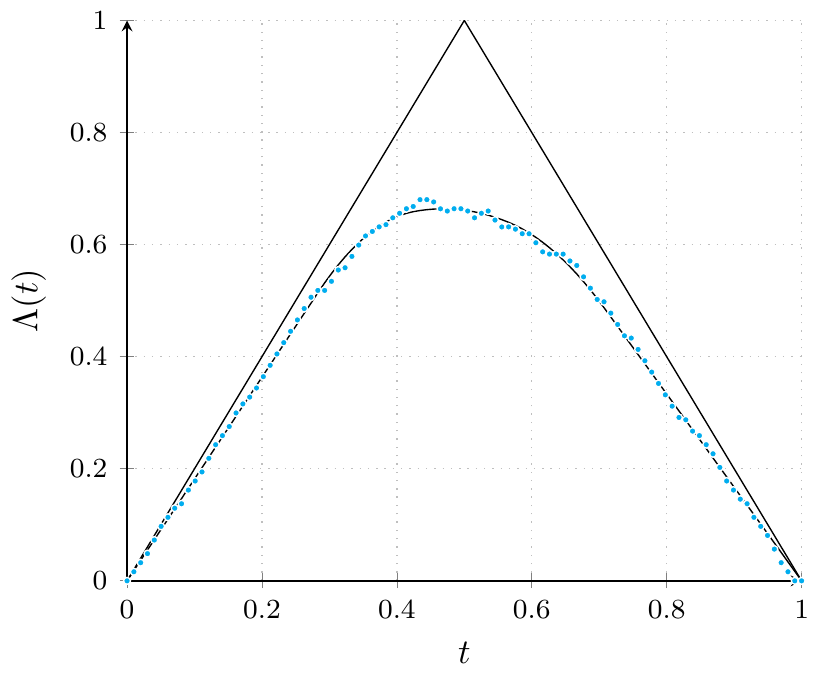}
\caption{S\&P500/Agilent Technologies Inc.}
\end{subfigure}
\begin{subfigure}[c]{0.4\textwidth}
\includegraphics{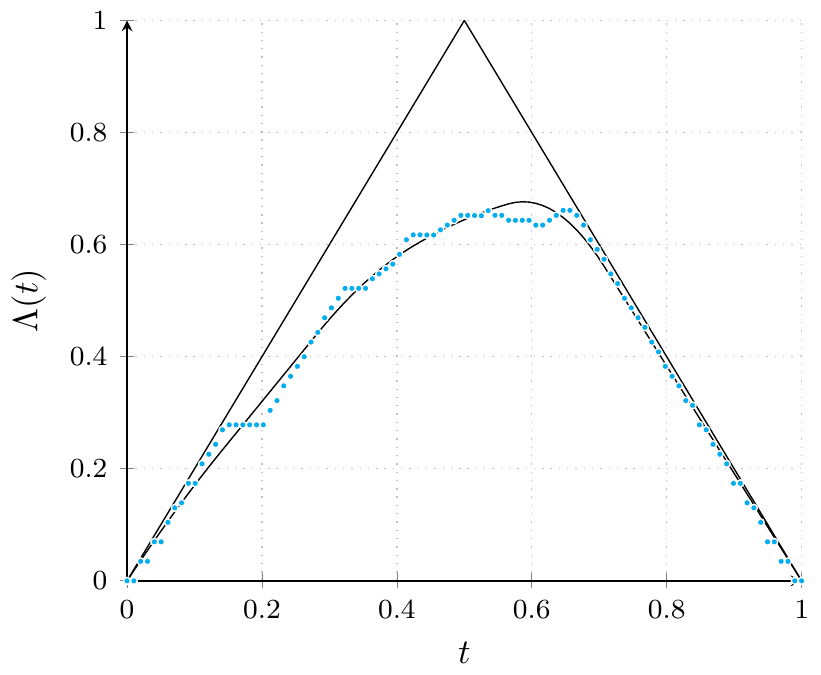}
\caption{S\&P500/Celanese Corp.}
\end{subfigure}
\caption{The figure shows three examples of empirical tail dependence functions estimated for the time period Jan/9/2019-Dec/31/2020 for three different bivariate pairs of the S\&P500 index with different individual stocks.}
\label{fig:4}%
\end{figure}


\begin{figure}[H]
\begin{subfigure}[c]{0.4\textwidth}
\includegraphics{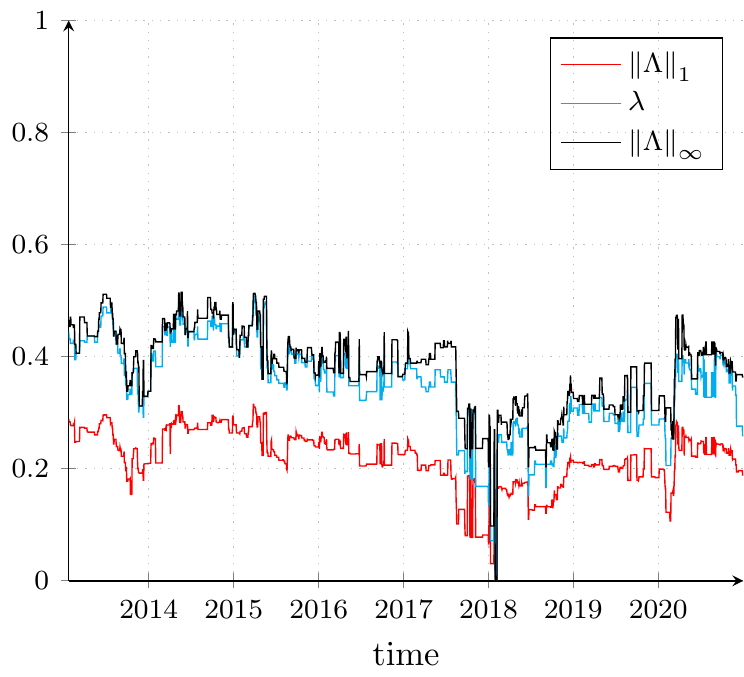}
\caption{S\&P500/Domino's Pizza Inc.}
\end{subfigure}
\begin{subfigure}[c]{0.4\textwidth}
\includegraphics{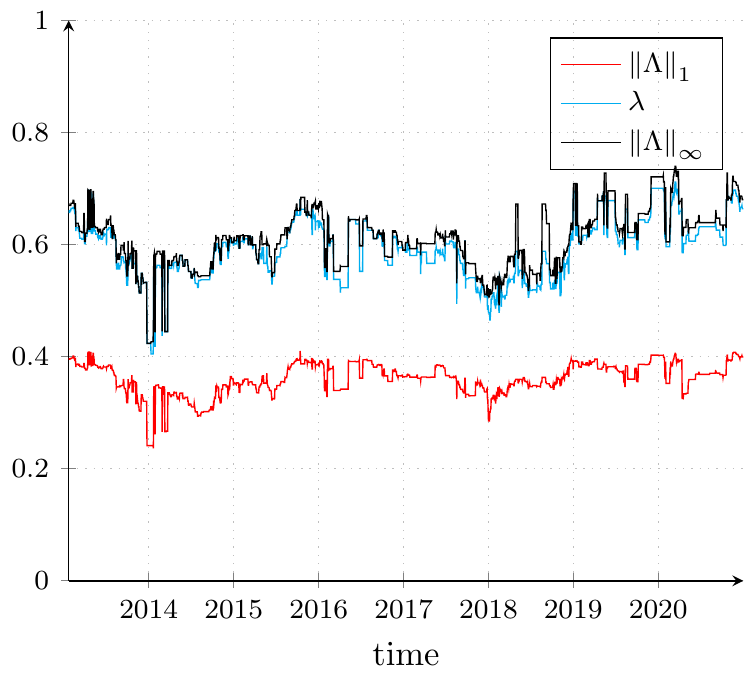}
\caption{S\&P500/Agilent Technologies Inc.}
\end{subfigure}
\begin{subfigure}[c]{0.4\textwidth}
\includegraphics{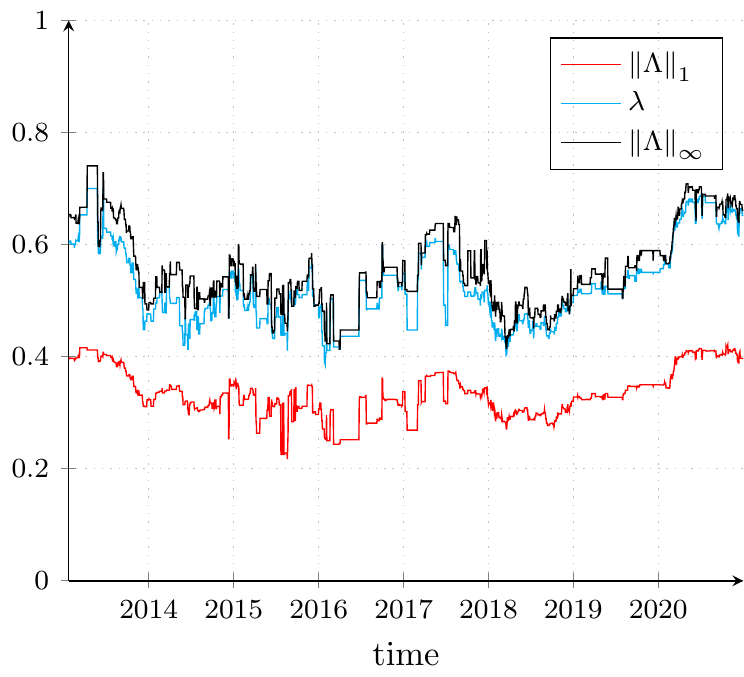}
\caption{S\&P500/Celanese Corp.}
\end{subfigure}
\caption{The figure shows the time evolution of the $L_1$ and $L_{\infty}$ measures, as well as the tail dependence coefficient for three selected bivariate pairs of the S\&P500 index with different individual stocks. The measures of tail dependence are estimated using 500-day rolling windows for the time period Jan/27/2011-Dec/31/2020.}
\label{fig:5}%
\end{figure}


\begin{figure}[H]
\begin{subfigure}[c]{0.4\textwidth}
\includegraphics{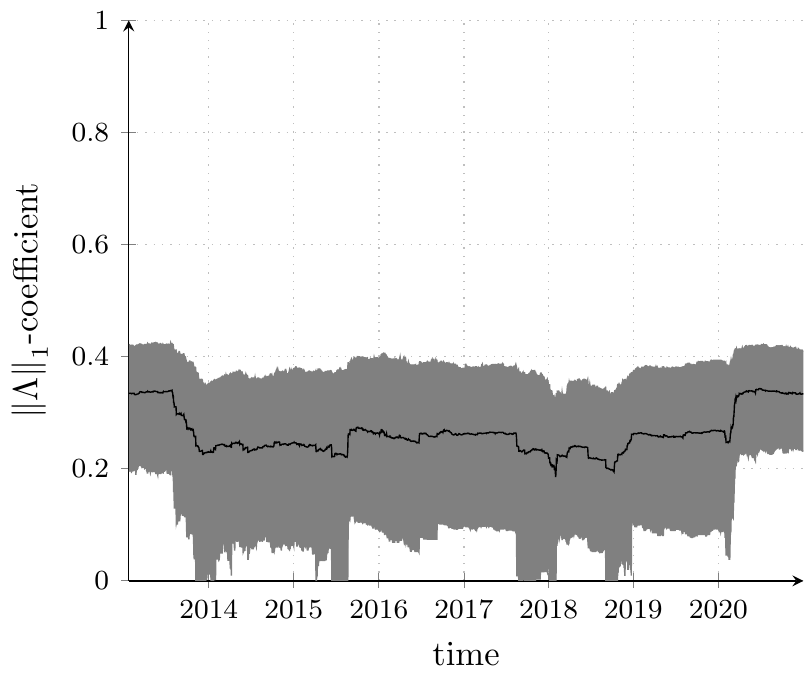}
\caption{$L_1$-norm}
\end{subfigure}
\begin{subfigure}[c]{0.4\textwidth}
\includegraphics{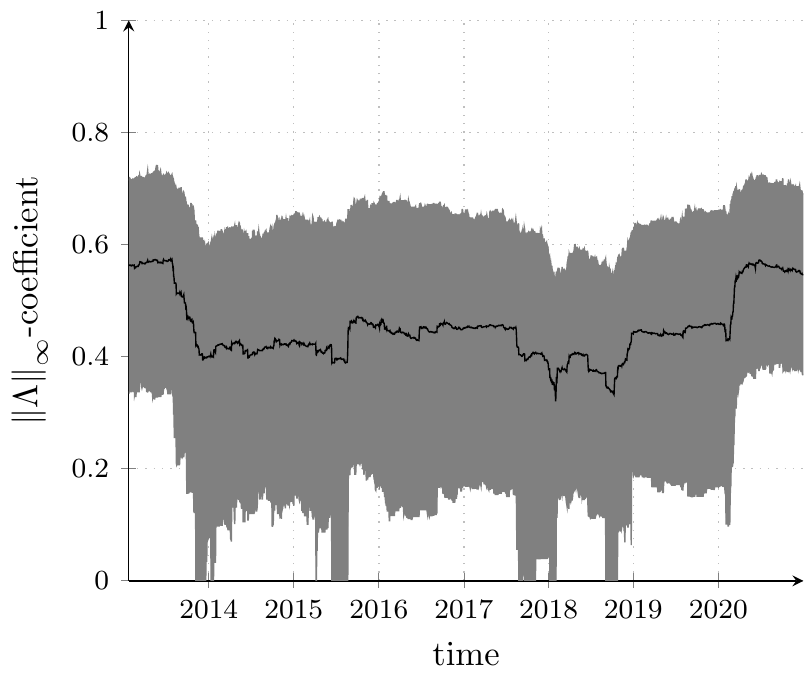}
\caption{$L_{\infty}$-norm}
\end{subfigure}
\begin{subfigure}[c]{0.4\textwidth}
\includegraphics{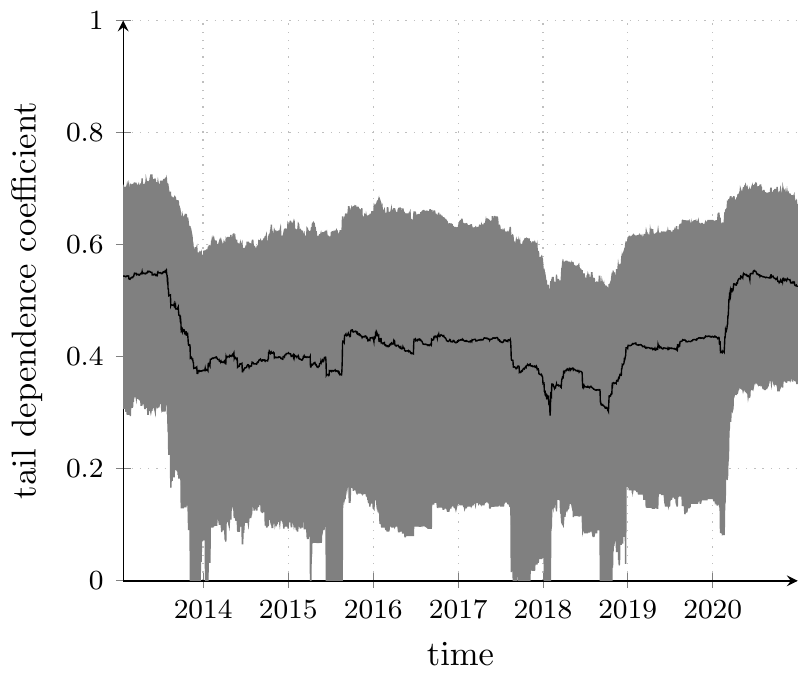}
\caption{Tail dependence coefficient $\lambda$}
\end{subfigure}
\caption{The figure shows the time evolution of the $L_1$ (a) and $L_{\infty}$ (b) measures, as well as the tail dependence coefficient (c) for all bivariate pairs between the S\&P500 index and the 454 constituent stocks for which we have full data in the time period Jan/27/2011-Dec/31/2020. All measures of tail dependence are estimated using 500-day rolling windows. The cross-sectional mean is shown as a black line while the values of the respective measures that lie between the cross-sectional 5\%- and 95\%-quantiles at each point in time are shown in gray.}
\label{fig:6}%
\end{figure}


\begin{figure}[H]
\begin{subfigure}[c]{0.4\textwidth}
\includegraphics{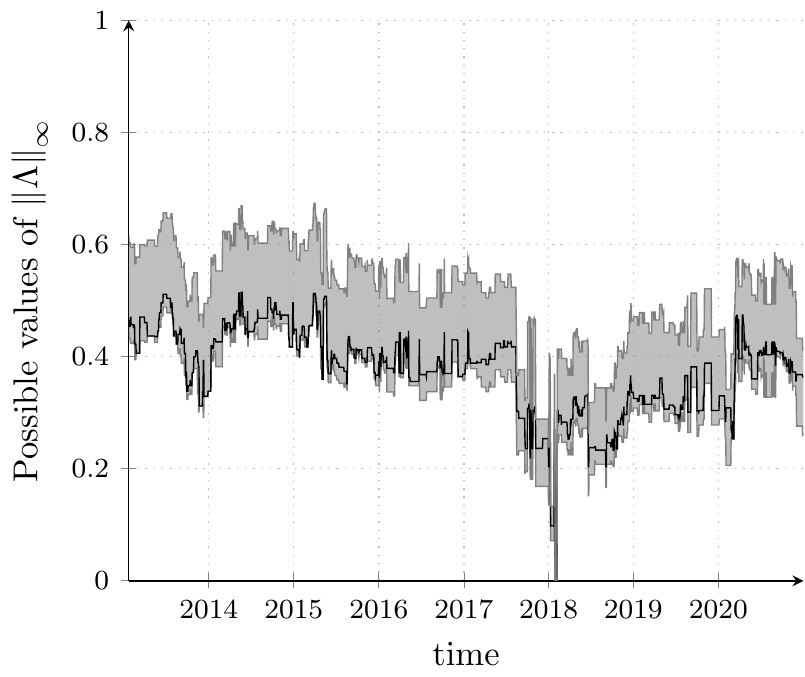}
\caption{S\&P500/Domino's Pizza Inc.}
\end{subfigure}
\begin{subfigure}[c]{0.4\textwidth}
\includegraphics{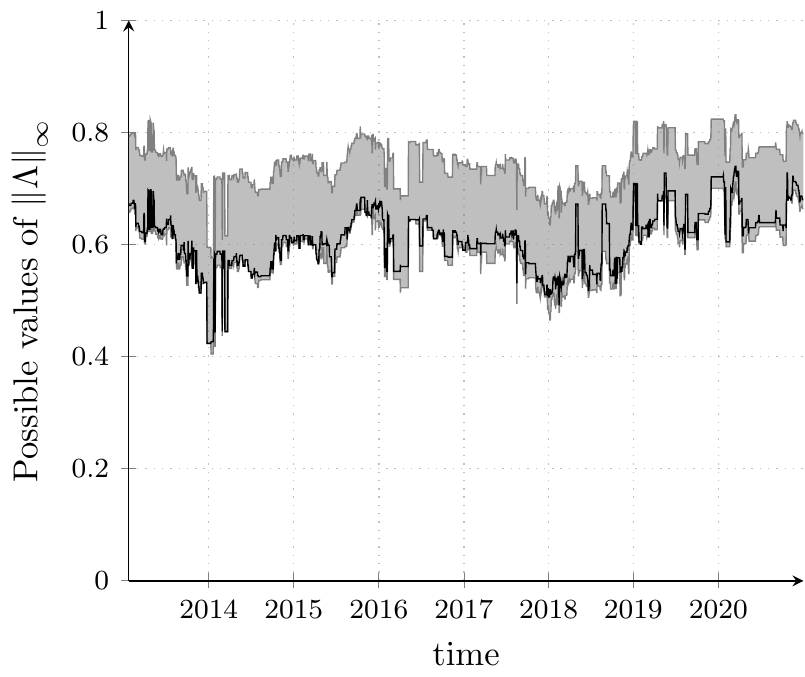}
\caption{S\&P500/Agilent Technologies Inc.}
\end{subfigure}
\begin{subfigure}[c]{0.4\textwidth}
\includegraphics{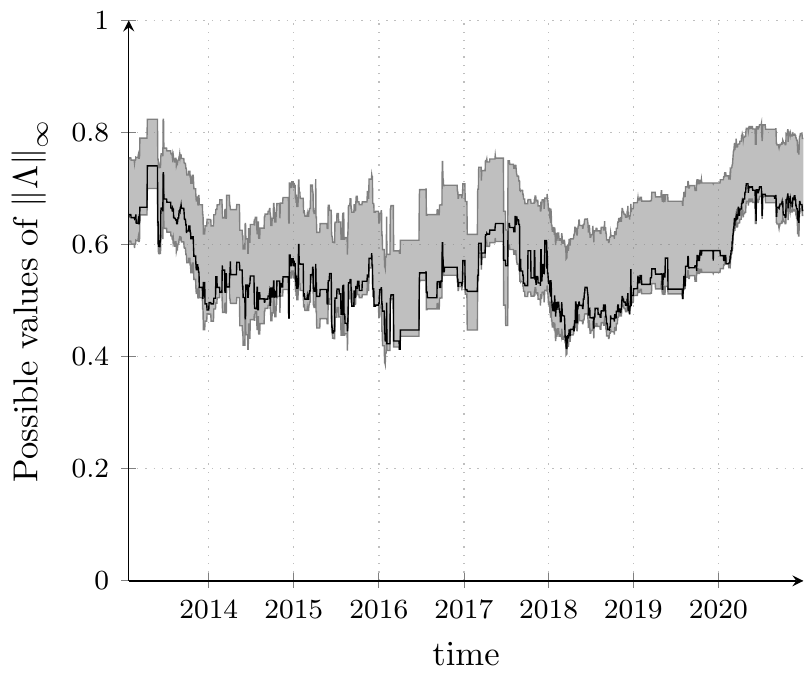}
\caption{S\&P500/Celanese Corp.}
\end{subfigure}
\caption{The figure shows the possible (gray area) and estimated (black line) values of the $L_{\infty}$ measure given the estimated level of the tail dependence coefficient for three selected bivariate pairs of the S\&P500 index with different individual stocks. The measures of tail dependence are estimated using 500-day rolling windows for the time period Jan/27/2011-Dec/31/2020.}
\label{fig:7}%
\end{figure}


\begin{table*}[]
\begin{tabular}{lcccccccc}
\hline\hline
                       & \textit{S\&P500} & \multicolumn{1}{c}{\textit{}} & \multicolumn{6}{c}{\textit{Individual Stocks (cross-section)}}                                                                                                                                                                                           \\ \cline{4-9} 
\textit{Statistic}                       &                    & \textit{}                     & \multicolumn{1}{c}{\textit{5\%-quantile}} & \multicolumn{1}{c}{\textit{10\%-quantile}} & \multicolumn{1}{c}{\textit{Mean}} & \multicolumn{1}{c}{\textit{Median}} & \multicolumn{1}{c}{\textit{90\%-quantile}} & \multicolumn{1}{c}{\textit{95\%-quantile}} \\ \hline
\textit{Mean}          & 0.0004                               &                               & -0.0001                                   & 0.0000                                     & 0.0004                            & 0.0004                              & 0.0009                                     & 0.0010                                     \\
\textit{Median}        & 0.0006                               &                               & 0.0000                                    & 0.0003                                     & 0.0007                            & 0.0007                              & 0.0011                                     & 0.0013                                     \\
\textit{St. dev.}      & 0.0110                               &                               & 0.0126                                    & 0.0134                                     & 0.0188                            & 0.0179                              & 0.0258                                     & 0.0278                                     \\
\textit{Minimum}       & -0.1277                              &                               & -0.3463                                   & -0.2853                                    & -0.1907                           & -0.1722                             & -0.1125                                    & -0.1025                                    \\
\textit{Maximum}       & 0.0897                               &                               & 0.0960                                    & 0.1035                                     & 0.1612                            & 0.1480                              & 0.2437                                     & 0.2719                                     \\
\textit{5\%-quantile}  & -0.0164                              &                               & -0.0389                                   & -0.0356                                    & -0.0267                           & -0.0257                             & -0.0190                                    & -0.0177                                    \\
\textit{95\%-quantile} & 0.0147                               &                               & 0.0172                                    & 0.0186                                     & 0.0262                            & 0.0250                              & 0.0351                                     & 0.0395                                     \\ \hline\hline
                       &                                      &                               &                                           &                                            &                                   &                                     &                                            &                                           
\end{tabular}
\caption{The table presents summary statistics for the S\&P 500 index as well as the 454 constituent equities that are employed in the empirical study. The sample period runs from Jan/27/2011 to Dec/31/2020 and includes 2,520 trading days.}
\label{tab:1}%
\end{table*}

\begin{table*}[]
\begin{tabular}{clccccccc}
\hline\hline
\multicolumn{1}{l}{}                   & \textit{}              & \textit{} & \multicolumn{6}{c}{\textit{S\&P500/stock pairs   (cross-section)}}                                                                                                                                                                                       \\ \cline{4-9} 
\multicolumn{1}{l}{\textit{Tail dep. measure}}                   & \textit{Statistic}     & \textit{} & \multicolumn{1}{c}{\textit{5\%-quantile}} & \multicolumn{1}{c}{\textit{10\%-quantile}} & \multicolumn{1}{c}{\textit{Mean}} & \multicolumn{1}{c}{\textit{Median}} & \multicolumn{1}{c}{\textit{90\%-quantile}} & \multicolumn{1}{c}{\textit{95\%-quantile}} \\ \hline
\multirow{7}{*}{\textit{$L_1$}}        & \textit{Mean}          &           & 0.0744                                    & 0.0818                                     & 0.1318                            & 0.1344                              & 0.1743                                     & 0.1805                                     \\
                                       & \textit{Median}        &           & 0.0669                                    & 0.0800                                     & 0.1341                            & 0.1395                              & 0.1781                                     & 0.1839                                     \\
                                       & \textit{St. dev.}      &           & 0.0175                                    & 0.0202                                     & 0.0329                            & 0.0313                              & 0.0474                                     & 0.0541                                     \\
                                       & \textit{Minimum}       &           & 0.0000                                    & 0.0000                                     & 0.0428                            & 0.0365                              & 0.1060                                     & 0.1196                                     \\
                                       & \textit{Maximum}       &           & 0.1574                                    & 0.1649                                     & 0.1934                            & 0.1973                              & 0.2164                                     & 0.2200                                     \\
                                       & \textit{5\%-quantile}  &           & 0.0000                                    & 0.0000                                     & 0.0747                            & 0.0771                              & 0.1351                                     & 0.1449                                     \\
                                       & \textit{95\%-quantile} &           & 0.1431                                    & 0.1526                                     & 0.1832                            & 0.1879                              & 0.2087                                     & 0.2127                                     \\ \hline
\multirow{7}{*}{\textit{$L_{\infty}$}} & \textit{Mean}          &           & 0.1273                                    & 0.1414                                     & 0.2256                            & 0.2317                              & 0.2952                                     & 0.3052                                     \\
                                       & \textit{Median}        &           & 0.1211                                    & 0.1416                                     & 0.2290                            & 0.2359                              & 0.2993                                     & 0.3111                                     \\
                                       & \textit{St. dev.}      &           & 0.0316                                    & 0.0338                                     & 0.0531                            & 0.0491                              & 0.0788                                     & 0.0892                                     \\
                                       & \textit{Minimum}       &           & 0.0000                                    & 0.0000                                     & 0.0827                            & 0.0811                              & 0.1926                                     & 0.2092                                     \\
                                       & \textit{Maximum}       &           & 0.2657                                    & 0.2768                                     & 0.3325                            & 0.3374                              & 0.3783                                     & 0.3907                                     \\
                                       & \textit{5\%-quantile}  &           & 0.0000                                    & 0.0000                                     & 0.1342                            & 0.1449                              & 0.2301                                     & 0.2432                                     \\
                                       & \textit{95\%-quantile} &           & 0.2365                                    & 0.2512                                     & 0.3084                            & 0.3160                              & 0.3567                                     & 0.3668                                     \\ \hline
\multirow{7}{*}{\textit{TDC}}          & \textit{Mean}          &           & 0.1154                                    & 0.1267                                     & 0.2139                            & 0.2191                              & 0.2852                                     & 0.2953                                     \\
                                       & \textit{Median}        &           & 0.1025                                    & 0.1300                                     & 0.2170                            & 0.2238                              & 0.2892                                     & 0.3008                                     \\
                                       & \textit{St. dev.}      &           & 0.0327                                    & 0.0357                                     & 0.0539                            & 0.0501                              & 0.0767                                     & 0.0859                                     \\
                                       & \textit{Minimum}       &           & 0.0000                                    & 0.0000                                     & 0.0743                            & 0.0705                              & 0.1787                                     & 0.1929                                     \\
                                       & \textit{Maximum}       &           & 0.2523                                    & 0.2667                                     & 0.3235                            & 0.3299                              & 0.3719                                     & 0.3825                                     \\
                                       & \textit{5\%-quantile}  &           & 0.0000                                    & 0.0000                                     & 0.1224                            & 0.1273                              & 0.2175                                     & 0.2336                                     \\
                                       & \textit{95\%-quantile} &           & 0.2263                                    & 0.2395                                     & 0.2991                            & 0.3061                              & 0.3499                                     & 0.3584                                     \\ \hline\hline
\end{tabular}
\caption{Summary statistics: Tail dependence measures.}
\label{tab:2}%
\end{table*}


\bibliography{reference}
\end{document}